\documentclass[10pt,conference]{IEEEtran}

\usepackage{graphicx}
\hyphenation{op-tical net-works semi-conduc-tor}
\usepackage{enumitem}
\usepackage[cmex10]{amsmath}
\interdisplaylinepenalty=2500
\usepackage{amssymb}
\usepackage{cite}
\usepackage{amsthm}
\usepackage{textcomp}
\usepackage{color}
\usepackage{cases}
\usepackage{tcolorbox}
\usepackage{fix-cm}
\usepackage{bm}
\usepackage{multirow}
\usepackage{tikz}
\usepackage{pgfplots}
\usepackage{array}
\newcolumntype{P}[1]{>{\centering\arraybackslash}p{#1}}
\newcolumntype{M}[1]{>{\centering\arraybackslash}m{#1}}

\usetikzlibrary{pgfplots.groupplots}

\newcommand{\vct}[1]{\bm{#1}}

\makeatletter
\def\thm@space@setup{\thm@preskip=2pt
\thm@postskip=2pt \itshape}
\makeatother
\newtheoremstyle{newstyle}      
{} %Aboveskip 
{} %Below skip
{\mdseries} %Body font e.g.\mdseries,\bfseries,\scshape,\itshape
{} %Indent
{\bfseries} %Head font e.g.\bfseries,\scshape,\itshape
{.} %Punctuation afer theorem header
{ } %Space after theorem header
{} %Heading

\theoremstyle{newstyle}

\newtheorem{theorem}{Theorem}
\newtheorem{lemma}{Lemma}

\theoremstyle{definition}

\newtheorem{definition}{Definition}

\theoremstyle{remark}
\newtheorem{remark}{Remark}

\setlist[description]{style=multiline}

\IEEEoverridecommandlockouts

\begin{document}
\sloppy

\setlength{\abovedisplayskip}{1mm}
\setlength{\belowdisplayskip}{1mm}
\setlength{\abovecaptionskip}{1mm}
\setlength{\belowcaptionskip}{-6pt}

\title{Near-Optimal Straggler Mitigation for Distributed Gradient Methods} 

%\author{
%  \IEEEauthorblockN{Songze~Li}  %\thanks{This research was funded in part by one or all of these grants: %ONR N00014-09-1-0700, CCT-0917343, CCT-1117896, CNS-1213128, ATOSR %TA9550-12-1-0215, and DOT CA-26-7084-00.}
%  \IEEEauthorblockA{
%    University of Southern California\\
%    %songzeli@usc.edu
%    } 
%  \and
%  \IEEEauthorblockN{Mohammad~Ali~Maddah-Ali}
%  \IEEEauthorblockA{Nokia Bell Labs\\
%    %mohammadali.maddah-ali@alcatel-lucent.com
%    }
%  \and
%  \IEEEauthorblockN{A.~Salman~Avestimehr}
%  \IEEEauthorblockA{%Department of Electrical Engineering\\ 
%    University of Southern California\\
%    %avestimehr@ee.usc.edu
%    }
%}

\author{\IEEEauthorblockN{Songze~Li$^*$, Seyed~Mohammadreza~Mousavi~Kalan$^*$, A.~Salman~Avestimehr, and Mahdi~Soltanolkotabi} \thanks{$^*$The first two authors contributed equally to this work.}
\IEEEauthorblockA{University of Southern California \\Email: \{songzeli,mmousavi\}@usc.edu, avestimehr@ee.usc.edu, soltanol@usc.edu}
}

\maketitle

\begin{abstract} 
Modern learning algorithms use gradient descent updates to train inferential models that best explain data. Scaling these approaches to massive data sizes requires proper distributed gradient descent schemes where distributed worker nodes compute partial gradients based on their partial and local data sets, and send the results to a master node where all the computations are aggregated into a full gradient and the learning model is updated. However, a major performance bottleneck that arises is that some of the worker nodes may run slow. These nodes a.k.a.~stragglers can significantly slow down computation as the slowest node may dictate the overall computational time. We propose a distributed computing scheme, called Batched Coupon's Collector (BCC) to alleviate the effect of stragglers in gradient methods. We prove that our BCC scheme is robust to a near optimal number of random stragglers. We also empirically demonstrate that our proposed BCC scheme reduces the run-time by up to 85.4\% over Amazon EC2 clusters when compared with other straggler mitigation strategies. We also generalize the proposed BCC scheme to minimize the completion time when implementing gradient descent-based algorithms over heterogeneous worker nodes.

%One major challenge in such distributed environments is that some of the worker nodes may run
%slow a.k.a. stragglers.
%A major challenge that arises in such distributed settings however is that some 
\end{abstract}

\section{Introduction}\label{sec:intro}
Gradient descent (GD) serves as a working-horse for modern inferential learning tasks spanning computer vision to recommendation engines. In these learning tasks one is interested in fitting models to a training data set of $m$ training examples $\{\vct{x}_j\}_{j=1}^m$ (usually consisting of input-output pairs). The fitting problem often consists of finding a mapping that minimizes the empirical risk
\begin{align*}
\mathcal{L}(\vct{w}):=\frac{1}{m}\sum_{j=1}^m \ell(\vct{x}_j;\vct{w}).
\end{align*}
Here, $\ell(\vct{x}_j;\vct{w})$ is a loss function measuring the misfit between the model and output on $\vct{x}_j$ with $\vct{w}$ denoting the model parameters.
GD solves the above optimization problem via the following iterative updates
\begin{align}
        \vct{w}_{t+1}=\vct{w}_t-\mu_t\nabla \mathcal{L}(\vct{w}_t) = \vct{w}_t - \mu_t \frac{1}{m} \sum_{j=1}^m \vct{g}_j(\vct{w}_t).\label{eq:GD-update}
\end{align}%there has been a plethora of research focused on 
Here, $\vct{g}_j(\vct{w}_t) = \nabla \ell (\vct{x}_j;\vct{w}_t)$ is the partial gradient with respect to $\vct{w}_t$ computed from $\vct{x}_j$, and $\mu_t$ is the learning rate in the $t$th iteration. 

In order to scale GD to handle massive amount of training data, developing parallel/distributed implementations of gradient descent over multiple cores or GPUs on a single machine, or multiple machines in computing clusters is of significant importance \cite{recht2011hogwild,gemulla2011large,zhuang2013fast,seide20141}.
\begin{figure}[htbp]
  \centering
  \includegraphics[width=0.45\textwidth]{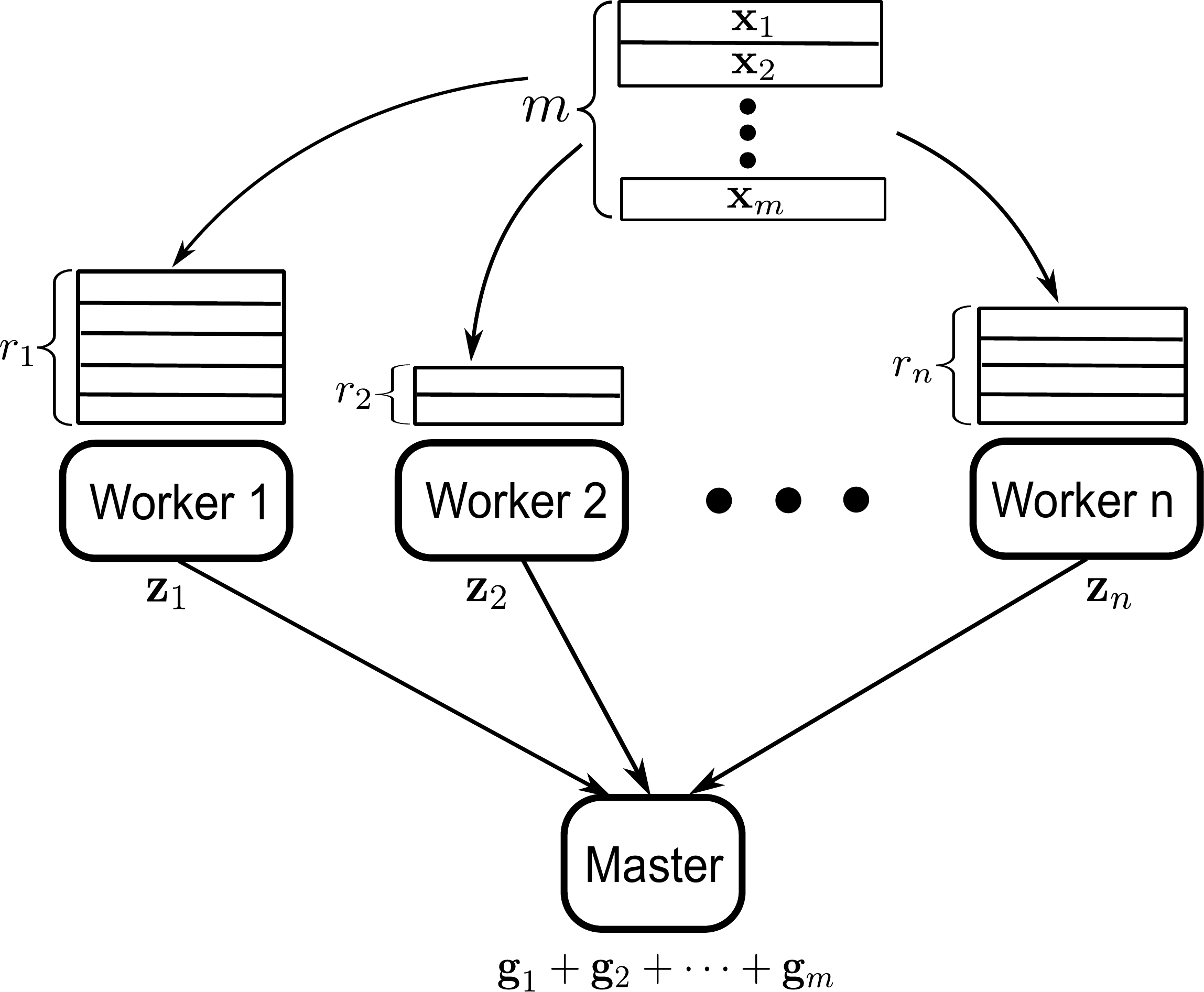}
  \caption{A master-worker distributed computing model for distributed gradient descent.}
  \label{fig:setting}
\end{figure}
In this paper we consider a distributed computing model consisting of a master node and $n$ workers as depicted in Fig.~\ref{fig:setting}. Each worker~$i$ stores and processes a subset of $r_i$ training examples locally, and then generates a message $\vct{z}_i$ based on computing partial gradients using the local training data, and then sends this message to the master node. The master collects the messages from the workers, and uses these messages to compute the total gradient and update the model via (\ref{eq:GD-update}). If each worker processes a disjoint subset of the examples, the master needs to gather all partial gradients from all the workers. Therefore, when different workers compute and communicate at different speeds, the run-time of each iteration of distributed GD is limited by the slowest worker (or straggler). This phenomenon known as the straggler effect, significantly delays the execution of distributed computing tasks when some workers compute or communicate much slower than others. For example, it was shown in~\cite{ananthanarayanan2010reining} that over a wide range of production jobs, stragglers can prolong the completion time by 34\% at median.

We focus on straggler mitigation in the above distributed GD framework. To formulate the problem, we first define two key performance metrics that respectively characterize how much local processing is needed at each worker, and how many workers the master needs to wait for before it can compute the gradient. In particular, we define the \emph{computational load}, denoted by $r$, as the number of training examples each worker processes locally, and the \emph{recovery threshold}, denoted by $K$, as the average number of workers from whom the master collects the results before it can recover the gradient. As a function of the computational load $r$, the recovery threshold $K$ decreases as $r$ increases. For example, when $r = \frac{m}{n}$ such that each worker processes a disjoint subset of the examples, $K$ attains its maximum of $n$. One the other hand, if each worker processes all examples, i.e., $r=m$, the master only needs to wait for one of them to return the result, achieving the minimum $K=1$. For an arbitrary computational load $\frac{m}{n} \leq r \leq m$, we aim to characterize the minimum recovery threshold across all computing schemes, denoted by $K^*(r)$, which provides the maximum robustness to the straggler effect. Moreover, due to the high communication overhead to transfer the results to the master (especially for a high-dimensional model vector $\vct{w}$), we are also interested in characterizing the minimum \emph{communication load}, denoted by $L^*(r)$, which is defined as the (normalized) size of the messages received at the master before it can recover the gradient. %\MS{I don't understand this definition. Can you write down equations}

To reduce the effect of stragglers in this paper we propose a distributed computing scheme, named ``Batched Coupon's Collector'' (BCC). We will show that this scheme achieves the recovery threshold
\begin{align}
     K_{\textup{BCC}}(r) = \lceil \tfrac{m}{r} \rceil H_{\lceil \frac{m}{r} \rceil}\approx \tfrac{m}{r}\log \tfrac{m}{r},
\end{align}
where $H_n$ denotes the $n$th harmonic number. We also prove a simple lower bound on the minimum recovery threshold demonstrating that
\begin{align*}
K^*(r) \geq \frac{m}{r}.
\end{align*}
Thus, our proposed BCC scheme achieves the minimum recovery threshold to within a logarithmic factor, that is,
\begin{align}
      K^*(r) \!\leq\!  K_{\textup{BCC}}(r) \!\leq\!  \lceil K^*(r)\rceil H_{\lceil \frac{m}{r} \rceil} \approx K^*(r)\log\tfrac{m}{r}.
\end{align}
We will also demonstrate that the BCC scheme achieves the minimum communication load to within a logarithmic factor, that is,
\begin{align}
      L^*(r) \leq  L_{\textup{BCC}}(r) \leq  \lceil L^*(r)\rceil H_{\lceil \frac{m}{r} \rceil} \approx L^*(r)\log\tfrac{m}{r}.
\end{align}

The basic idea of the proposed BCC scheme is to obtain the ``coverage'' of the computed partial gradients at the master. Specifically, we first partition the entire training dataset into $\frac{m}{r}$ batches of size $r$, and then each worker independently and randomly selects a batch to process. As a result, the process of collecting messages at the master emulates the coupon collecting process in the well-known coupon collector's problem (see, e.g.,~\cite{ross2012first}), which requires to collect a total of $\frac{m}{r}$ different types of coupons using $n$ independent trials.
%\MS{Also discuss what n is here for this analogy} 
Since the examples in different batches are disjoint, we can compress the computed partial gradients at each worker by simply summing them up, and send the summation to the master. Utilizing the algebraic property of the overall computation, the proposed BCC scheme attains the minimum communication load from each worker. %(i.e., the size of a partial gradient from a single example).

Beyond the theoretical analysis, we also implement the proposed BCC scheme on Amazon EC2 clusters, and empirically demonstrate performance gain over the state-of-the-art straggler mitigation schemes. In particular, we run a baseline uncoded scheme where the training examples are uniformly distributed across the workers without any redundant data placement, the cyclic repetition scheme in~\cite{TLDK16} designed to combat the stragglers for the worst-case scenario, and the proposed BCC scheme, on clusters consisting of $50$ and $100$ worker nodes respectively. We observe that the BCC scheme speeds up the job execution by up to 85.4\% compared with the uncoded scheme, and by up to 69.9\% compared with the cyclic repetition scheme.  

Finally, we generalize the BCC scheme to accelerate distributed GD in heterogeneous clusters, in which each worker may be assigned different number of training examples according to its computation and communication capabilities. In particular, we derive analytically lower and upper bounds on the minimum job execution time, by developing and analyzing a generalized BCC scheme for heterogeneous clusters. %, by utilizing a result in~\cite{reisizadehmobarakeh2017coded} that characterizes the minimum execution time of a distributed matrix multiplication algorithm over a heterogeneous cluster. 
We have also numerically evaluated the performance of the proposed generalized BCC scheme. In particular, compared with a baseline strategy where the dataset is distributed without repetition, and the number of examples a worker processes is proportional to its processing speed, we numerically demonstrate a $29.28$\% reduction in average computation time. 

\subsection*{Prior Art and Comparisons}
For the aforementioned distributed GD problem, a simple data placement strategy is that each worker selects $r$ out of the $m$ examples uniformly at random. Under this data placement, each worker processes each of the selected examples, and communicates the computed partial gradient individually to the master. Following the arguments of the coupon's collector problem, this simple randomized computing scheme achieves a recovery threshold \begin{align}
K_{\textup{random}} \approx \frac{m}{r}\log m.
\end{align}
Similar to the proposed BCC scheme, this randomized scheme achieves the minimum recovery threshold to within a logarithmic factor. However, since each worker communicates $r$ times more messages, the communication load has increased to
\begin{align}
L_{\textup{random}} \approx m\log m.
\end{align}

Recently a few interesting papers~\cite{TLDK16,halbawi2017improving,raviv2017gradient} utilize coding theory to mitigate the effect of stragglers in distributed GD. In particular, a cyclic repetition (CR) scheme was proposed in~\cite{TLDK16} to randomly generate a coding matrix, which specifies the data placement and how to encode the computed partial gradients across workers for communication. Furthermore, in~\cite{halbawi2017improving} and~\cite{raviv2017gradient}, the same performance was achieved using deterministic constructions of Reed-Solomon (RS) codes and cyclic MDS (CM) codes. These coding schemes can tolerate $r-1$ stragglers in the worst case when the computational load is $r$. More specifically, when the number of examples is equal to the number of workers ($m=n$)\footnote{When $m>n$, we can partition the dataset into $n$ groups, and view each group of $\frac{m}{n}$ training examples as a ``super example''.}, the above coded schemes achieve the recovery threshold 
\begin{align}
    K_{\textup{CR}} = K_{\textup{RS}} = K_{\textup{CM}} = m - r + 1.\label{eq:worst-thresh}
\end{align}
In all of these coded schemes, each worker encodes the computed partial gradients by generating a linear combination of them, and communicates the single coded message to the master. This yields a communication load 
\begin{align}
    L_{\textup{CR}} = L_{\textup{RS}} = L_{\textup{CM}} = m - r + 1.\label{eq:worst-comm}
\end{align}

\begin{figure}[h]
\centering
\begin{tikzpicture}[scale=1] 
\begin{axis}[
        xlabel=Computational Load ($r$),
        ylabel=Recovery Threshold ($K$), legend style={font=\small,at={(0.65,0.95)},anchor=north,legend cell align=left}]
     % mark=diamond, mark options={solid},
    % ,mark=square, each nth point=50
     \addplot [black,line width=2pt] table[x index=0,y index=1]{./results1};
     \addlegendentry{Lower bound}
     
     \addplot [red,line width=2pt] table[x index=0,y index=1]{./results2};
     \addlegendentry{Proposed BCC scheme}
     
     \addplot [teal,line width=2pt] table[x index=0,y index=1]{./results3};
     \addlegendentry{Simple randomized scheme}
     
     \addplot [blue,line width=2pt] table[x index=0,y index=1]{./results4};
     \addlegendentry{CR scheme}
\end{axis}
\end{tikzpicture}
\caption{The tradeoffs between the computational load and the recovery threshold, for distributed GD using $m=100$ training examples across $n=100$ workers.}
\label{fig:compare}
\end{figure}
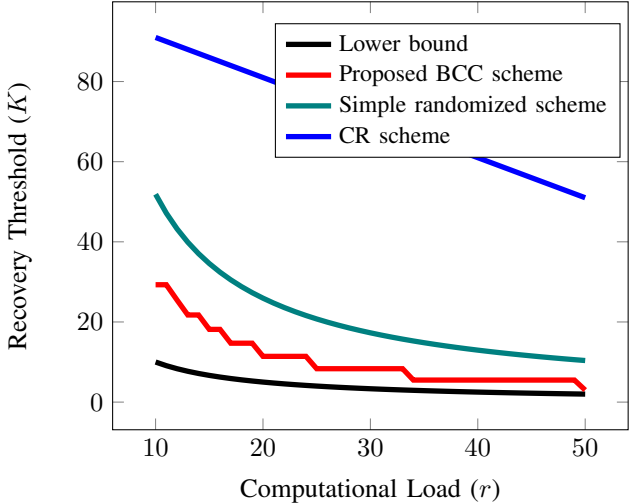

While the above simple randomized scheme and the coding theory-inspired schemes are effective in reducing the recovery threshold and the communication load respectively, the proposed BCC scheme achieves the best of both. In Fig.~\ref{fig:compare}, we numerically compare the recovery threshold of the randomized scheme, the CR scheme in~\cite{TLDK16}, and the proposed BCC scheme, and demonstrate the performance gain of BCC.  To summarize, the proposed BCC schemes has the following advantages
\begin{itemize}
   \item \emph{Simplicity}: Unlike the computing schemes that rely on delicate code designs for data placement and communication, the BBC scheme is rather simple to implement, and has little coding overhead. 

    \item \emph{Reliability}: The BCC scheme simultaneously achieves near minimal recovery threshold and communication load, enabling good straggler mitigation and fast job execution.
    %we expect on average a strong capability to handle the straggler effect, and in general a fast job execution. 
    
    \item \emph{Universality}: In contrast to the coding theory-inspired schemes like CR, the proposed BCC scheme does not require any prior knowledge about the number of stragglers in the cluster, which may not be available or vary across the iterations.
    
    \item \emph{Scalability}: The data placement in the BCC scheme is performed in a completely decentralized manner. This allows the BCC scheme to seamlessly scale up to larger clusters with minimum overhead for reshuffling the data. 
\end{itemize}

% \begin{figure}[htbp]
% %\vspace{-5mm}
%   \centering
%   \includegraphics[width=0.45\textwidth]{compare.pdf}
%   \caption{The tradeoffs between the computational load and the recovery threshold, for distributed GD using $m=100$ training examples across $n=100$ workers.}
%   %\vspace{-3mm}
%   \label{fig:compare}
% \end{figure}

Finally, we highlight some recent developments of utilizing coding theory to speedup a broad calss of distributed computing tasks. In~\cite{lee2017speeding,dutta2016short}, maximum distance separable (MDS) error-correcting codes were applied to speedup distributed linear algebra operations (e.g., matrix multiplications). In particular, MDS codes were utilized to generate redundant coded computing tasks, providing robustness to missing results from slow workers. The proposed coded computing scheme in~\cite{lee2017speeding} was further generalized in~\cite{LMA16_unify}, where it was shown that the solution of~\cite{lee2017speeding} is a single operating point on a more general tradeoff between computation latency and communication load. Other than dealing with stragglers, coding theory was also shown to be an effective tool to alleviate communication bottlenecks in distributed computing. In~\cite{LMA_all,li2016fundamental}, for a general MapReduce framework implemented on a distributed computing cluster, an optimal tradeoff between the local computation on individual workers and the communication between workers was characterized, exploiting coded multicasting opportunities created by carefully designing redundant computations across workers.

% Recently, coding theory has been applied to various components of distributed computing systems to speed up task execution. In~\cite{LMA_all,li2016fundamental}, an optimal tradeoff between the computation and the communication in a general MapReduce framework was characterized exploiting coded multicasting opportunities. On the other hand, it was proposed in~\cite{lee2017speeding} to use MDS codes to mitigate straggler effect in distributed matrix multiplication. The work~\cite{dutta2016short} proposed to further speed up matrix multiplication by sparsifying the inner products across computing nodes. The proposed coded computing scheme~\cite{lee2017speeding} was further generalized in~\cite{LMA16_unify}, where it was shown that the solution of~\cite{lee2017speeding} is a single operating point on a more general tradeoff between computation latency and communication load. Finally, most related to this work, there have been a few interesting papers~\cite{TLDK16,halbawi2017improving,raviv2017gradient} that focus on designing randomized and deterministic coding schemes, in order to deal with the worst-case scenario in distributed GD. 

%\vspace{-3mm}
\section{Problem Formulation}
%\vspace{-1mm}
%Given a batch of $m$ training examples $\vct{ x}_1,\vct{x}_2,\ldots,\vct{x}_m$, for some integer $m \in \mathbb{N}$, which are drawn i.i.d. from an unknown distribution $p(\vct{x})$, and a loss function $\ell(\vct{x};\vct{w})$, for some weight vector $\vct{w} \in \mathbb{R}^d$ and some integer $d \in \mathbb{N}$, the goal is to find the optimal weight vector $\vct{w}^*$ that minimizes the following empirical risk
%\begin{align}
%   \mathcal{L}(\vct{w})=\frac{1}{m}\sum_{j=1}^m \ell(\vct{x}_j;\vct{w}).\label{eq:loss}
%\end{align}
%
%A gradient descent (GD) algorithm solves the above problem by iteratively updating the weight vector using all the examples in the data batch, i.e., in the $t$th iteration,
%\begin{align}
%         \vct{w}_{t+1} = \vct{w}_t - \mu_t \frac{1}{m} \sum_{j=1}^m \vct{g}_j(\vct{w}_t),\label{eq:update}
%\end{align}
%where $\vct{g}_j(\vct{w}^t) = \nabla \ell(\vct{x}_j;\vct{w}^t)$ is the partial gradient with respect to $\vct{w}^t$ computed from $\vct{x}_j$, and $\mu_t$ is the learning rate in the $t$th iteration.
We focus on a data-distributed implementation of the gradient descent updates in \eqref{eq:GD-update}. In particular, as shown in Fig.~\ref{fig:setting} of Section~\ref{sec:intro}, we employ a distributed computing system that consists of a master node, and $n$ worker nodes (denoted by Worker~$1$, Worker~$2,\ldots,$ Worker~$n$). Worker $i$, stores and processes locally a subset of $r_i \leq m$ training examples. We use ${\cal G}_i \subseteq \{1,\ldots,m\}$ to denote the set of the indices of the examples processed by Worker $i$. In the $t$th iteration, Worker $i$ computes a partial gradient $\vct{g}_j(\vct{w}_t)$ with respect to the current weight vector $\vct{w}_t$, for each $j \in {\cal G}_i$. Ideally we would like the workers to process as few examples as possible. This leads us to the following definition for characterizing the computational load of distributed GD schemes.
\begin{definition}[Computational Load]
We define the computational load, denoted by $r$, as the maximum number of training examples processed by a single worker across the cluster, i.e., $r := \underset{i=1,\ldots,n}{\max} r_i$.
\end{definition}

% \noindent {\bf Example (Gradient Computation).} We can consider the problem of computing the gradient from a dataset matrix $\vct{x}$. In this case, $y_i = g(\vct{x}_i)$ computes the intermediate gradient based on $i$th data point $\vct{x}_i$, for all $i=1,\ldots,n$. Finally, these intermediate gradients are summed up to obtain the full gradient $g(\vct{x}) = y_1+ \cdots+y_n$. $\hfill \square$

% \noindent {\bf Example (Gramian Computation).} For example, for a dataset matrix $\vct{x} \in \mathbb{R}^{m \times (p+1)}$, we want to compute the Gramian of $\vct{x}$, i.e., $f(\vct{x}) = \vct{x}^T\vct{x} \in \mathbb{R}^{(p+1) \times (p+1)}$. In this case, we have the function decomposition such that 
% \begin{align}
% {\bf y}_j &= g(\vct{x}_j) = \vct{x}_j^T\vct{x}_j, \quad j=1,\ldots,m, \\
% f(\vct{x}) &= \vct{x}^T\vct{x} = {\bf y}_1+\cdots+{\bf y}_m.
% \end{align} 
% Gramian computation is commonly utilized in principal components analysis (PCA) computations, e.g., to compute the singular value decomposition (SVD) of the matrix $\vct{x}$.  $\hfill \square$

% We assume that the time for Worker~$i$ to process the data points whose indices are in ${\cal G}_i$, denoted by $T_i$, $i=1,\ldots,n$, is a random variable with a shift-exponential distribution, i.e.,
% \begin{align}
% \text{Pr}[T_i \le t]=1-\exp\left(\tfrac{-\mu_i}{r_i}(t-a_i r_i)\right), \; t \geq a_i r_i,
% \end{align}
% where $\mu_i\geq 0$ and $a_i \geq 0$ are fixed straggling parameter and shift parameter of Worker~$i$. We assume that $T_i$ and $T_j$ are independent for any $i\neq j$.

The assignment of the training examples to the workers, or the data distribution, can be represented by a bipartite graph ${\bf G}$ that contains a set of data vertices $\{d_1,d_2,\ldots,d_m\}$, and a set of worker vertices $\{k_1,k_2,\ldots,k_n\}$. There is an edge connecting $d_j$ and $k_i$ if Worker~$i$ computes $\vct{g}_j$ locally, or in other words, $j$ belongs to ${\cal G}_i$. Since each data point needs to be processed by some worker, we require that ${\cal N}(k_1)\cup \ldots \cup {\cal N}(k_n)=\{d_1,\ldots,d_m\}$, where ${\cal N}(k_i)$ denotes the neighboring set of $k_i$.
%\MS{Parameter overloading we may want to consider using $\vct{\theta}_t$ for the parameter instead of $\vct{w}_t$}
After Worker~$i$, $i=1,\ldots,n$, finishes its local computations, it communicates a function of the local computation results to the master node. More specifically, as shown in Fig.~\ref{fig:setting} Worker~$i$ communicates to the master a message $\vct{z}_i$ of the form 
\begin{align}
\vct{z}_i = \phi_i(\{\vct{g}_j: j \in {\cal G}_i\}),
\end{align}
via an encoding function $\phi_i$.
%We denote the size of $\vct{z}_i$, normalized by the size of an intermediate function $g(\cdot)$, as $\ell_i$.

Let ${\cal W} \subseteq \{1,\ldots,n\}$ denote the index of the subset of workers whose messages are received at the master. After receiving these messages, the master node calculates the complete gradient (based on all training data) by using a decoding function $\psi$. More specifically,
\begin{align}
\psi(\{\vct{z}_i: i \in {\cal W}\}) = \frac{1}{m} \sum_{j=1}^m \vct{g}_j(\vct{w}_t).
\end{align}
In order for the master to be able to calculate the complete gradient from the received messages it needs to wait for a sufficient number of workers. We quantify this and a related parameter more precisely below. 
\begin{definition}[Recovery Threshold]
The recovery threshold, denoted by $K$, is the average number of workers from whom the master waits to collect messages before recovering the final gradient, i.e., $K := \mathbb{E}[|{\cal W}|$].
\end{definition}
\begin{definition}[Communication Load]
We define the communication load, denoted by $L$, as the average aggregated size of the messages the master receives from the workers with indices in ${\cal W}$, normalized by the size of a partial gradient computed from a single example.   
\end{definition}

We say that a pair $(r,K)$ is \emph{achievable} if for a computational load $r$, there exists a distributed computing scheme, such that the master recovers the gradient after receiving messages from on average $K$ or less workers.

\begin{definition}[Minimum Recovery Threshold]
We define the minimum recovery threshold, denoted by $K^*(r)$, as
\begin{align}
    K^*(r) := \min \{K: (r,K) \textup{ is achievable}\}
\end{align}
\end{definition}

We also define the minimum communication load, denoted by $L^*(r)$, in a similar manner.

In the next section, we propose and analyze a computing scheme for distributed GD over a homogeneous cluster, and show that it simultaneously achieves a near optimal recovery threshold and communication load (up to a logarithmic factor).

\section{The Batched Coupon's Collector (BCC) Scheme}
In this section, we consider homogeneous workers with identical computation and communication capabilities. %In particular, this implies that $\mu_1 = \mu_2 = \cdots = \mu_n$, and $a_1 = a_2 = \cdots = a_n$. 
As a result, each worker processes the same number of training examples, and we have $r_1 = r_2 = \cdots = r_n = r$. We note that in this case for the entire dataset to be stored and processed across the cluster, we must have $\frac{m}{r} \leq n$. For this setting, we propose the following scheme which we shall refer to as ``batched coupon's collector'' (BCC). 
\subsection{Description of BCC}
The key idea of the proposed BCC scheme is to obtain the ``coverage'' of the computed partial gradients at the master. As indicated by the name of the scheme, BCC is composed of two steps: ``batching'' and ``coupon collecting''. In the first step, the training examples are partitioned into batches, which are selected randomly by the workers for local processing. In the second step, the processing results from the data batches are collected at the master, emulating the process of the well-known coupon's collector problem. Next, we describe in detail the proposed BCC scheme.

\noindent {\bf Data Distribution.} For a given computational load $r$, as illustrated in Fig.~\ref{fig:BCC}, we first evenly partition the entire data set into $\lceil\frac{m}{r}\rceil$ data batches, and denote the index sets of the examples in these batches by ${\cal B}_1,{\cal B}_2,\ldots,{\cal B}_{\lceil\frac{m}{r}\rceil}$. Each of the batches contains $r$ examples (with the last batch possibly being zero-padded). Each worker node independently picks one of the data batches uniformly at random for local processing. We denote index set of the data points selected by Worker~$i$ as ${\cal B}_{\sigma_i}$, i.e.~${\cal G}_i = {\cal B}_{\sigma_i}$.

\begin{figure}[htbp]
  \centering
  \includegraphics[width=0.48\textwidth]{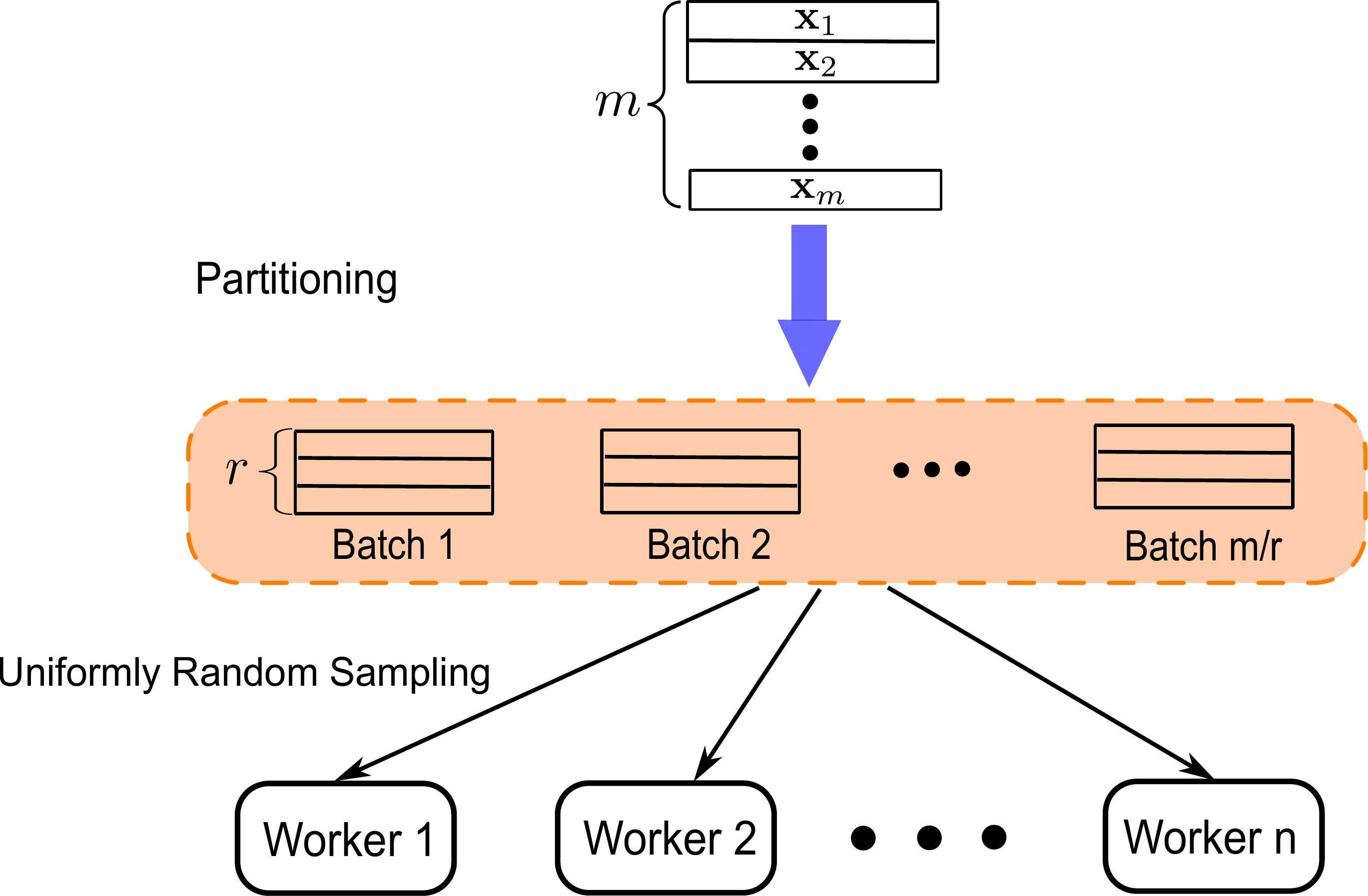}
  \caption{The data distribution of the proposed BCC scheme. The training dataset is evenly partition into $m/r$ batches of size $r$, from which each worker independently selects one uniformly at random.}
  \label{fig:BCC}
\end{figure}

\noindent {\bf Communication.} After computing the partial gradient $\vct{g}_j$ for all $j \in {\cal B}_{\sigma_i}$, Worker~$i$ computes a single message by summing them up i.e., 
\begin{align}
\vct{z}_i = \sum_{j \in {\cal B}_{\sigma_i}} \vct{g}_j,
\end{align}
and sends $\vct{z}_i$ to the master.

\noindent {\bf Data Aggregation at the Master.} When the master node receives the message from a worker, it discards the message if the master has received the result from processing the same batch before, and keeps the message otherwise. The master keeps collecting messages until the processing results from all data batches are received. Finally, the master reduces the kept messages to the final result by simply computing their summation.

We would like to note that the above BCC scheme is fully decentralized and coordination-free. Each worker selects its data batch independently of the other workers, and performs local computation and communication in a completely asynchronous manner. There is no need for any feedback from the master to the workers or between the workers. All these features make this scheme very simple to implement in practical scenarios.

\subsection{Near Optimal Performance Guarantees for BCC}
%\MS{I think it is better to add a bit more discussion here}
In this subsection, we theoretically analyze the BCC scheme, whose performance provides an upper bound on the minimum recovery threshold of the distributed GD problem, as well as an upper bound on the minimum communication load. To start, we state the main results of this paper in the following theorem, which characterizes the minimum recovery threshold and the minimum communication load to within a logarithmic factor.

\begin{theorem}\label{th:coupon}
For a distributed gradient descent problem of training $m$ data examples distributedly over $n$ worker nodes, we have
\begin{align}
\tfrac{m}{r} \leq K^*(r) \leq K_{\textup{BCC}}(r)= \lceil\tfrac{m}{r}\rceil H_{\lceil\tfrac{m}{r}\rceil}, \label{eq:RS} %\approx \lceil\tfrac{m}{r}\rceil \log (\lceil\tfrac{m}{r}\rceil),
\end{align}
\begin{align}
\tfrac{m}{r} \leq L^*(r) \leq L_{\textup{BCC}}(r)= \lceil\tfrac{m}{r}\rceil H_{\lceil\tfrac{m}{r}\rceil}, \label{eq:CL}
\end{align}
for sufficiently large $n$, where $K^*(r)$ and $L^*(r)$ are the minimum recovery threshold and the minimum communication load respectively, $K_{\textup{BCC}}(r)$ and $L_{\textup{BCC}}(r)$ are the recovery threshold and the communication load achieved by the BCC scheme, and $H_t = \sum_{k=1}^t \frac{1}{k}$ is the $t$-th harmonic number. %Furthermore, the minimum communication load is bounded as 
%\MS{this theorem needs to be restated. e.g.~would need to refer to various definitions. We should discuss this.}
\end{theorem}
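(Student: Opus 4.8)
The plan is to treat the two lower bounds and the two upper bounds separately, since the lower bounds are information-theoretic covering arguments that hold for every scheme, while the upper bounds follow from an exact analysis of the randomized BCC scheme.

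For the lower bounds $\tfrac{m}{r}\le K^*(r)$ and $\tfrac{m}{r}\le L^*(r)$, I would treat the partial gradients $\vct{g}_1,\dots,\vct{g}_m$ as generic (say independent and uniform over a large finite alphabet) and argue by a cut-set / covering bound. Fix any scheme and any realization in which the master successfully recovers $\tfrac{1}{m}\sum_j\vct{g}_j$ from the messages of the received set $\mathcal{W}$. Since each message $\vct{z}_i$ depends only on $\{\vct{g}_j:j\in\mathcal{G}_i\}$ with $|\mathcal{G}_i|\le r$, if some index $j$ were absent from every $\mathcal{G}_i$ for $i\in\mathcal{W}$ then the recovered sum could not depend on $\vct{g}_j$, contradicting exact recovery for arbitrary gradients. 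Hence $\bigcup_{i\in\mathcal{W}}\mathcal{G}_i=\{1,\dots,m\}$, which forces $|\mathcal{W}|\ge m/r$ deterministically and therefore $K^*=\mathbb{E}[|\mathcal{W}|]\ge m/r$. For the communication load I would refine this into an entropy statement: the received messages must jointly determine the sum while each is a function of at most $r$ of the independent underlying variables, and a sequential (peeling) accounting of the new gradient information contributed by each kept message shows the normalized aggregate size is at least $m/r$, giving $L^*\ge m/r$.

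For the upper bounds I would analyze BCC directly. By construction the data is split into $N:=\lceil m/r\rceil$ batches and each worker selects one uniformly at random, so the indices $\sigma_1,\sigma_2,\dots$ of the batches of successively received messages are i.i.d.\ uniform on $\{1,\dots,N\}$, and the master stops at the first time all $N$ batches have appeared. This is exactly the coupon collector's problem with $N$ coupon types, whose stopping time $T$ satisfies $\mathbb{E}[T]=N\sum_{k=1}^{N}\tfrac{1}{k}=N H_N$, obtained by decomposing $T$ into the independent geometric waiting times between acquiring the $k$-th and $(k+1)$-th distinct batch (the $k$-th such wait has mean $N/(N-k)$). Thus $K_{\textup{BCC}}(r)=\mathbb{E}[|\mathcal{W}|]=\lceil m/r\rceil H_{\lceil m/r\rceil}$. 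Since in BCC each worker transmits a single summed gradient $\vct{z}_i=\sum_{j\in\mathcal{B}_{\sigma_i}}\vct{g}_j$ of normalized size one, the aggregate communication equals the number of received messages, so $L_{\textup{BCC}}(r)=K_{\textup{BCC}}(r)$ as well, and the disjointness of the batches guarantees the final summation reconstructs the full gradient exactly.

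The main obstacle I anticipate is reconciling the clean coupon-collector expectation with the finite number of workers $n$, which is precisely why the statement is qualified by ``for sufficiently large $n$.'' For finite $n$ the collection may fail to complete within the $n$ available workers, so strictly $|\mathcal{W}|\le n$ and $\mathbb{E}[|\mathcal{W}|]$ is not literally $N H_N$; I would control this via the sharp concentration of $T$ around $N\log N$, whose upper tail $\Pr[T>cN\log N]$ decays (e.g.\ by a union bound over the $N$ batches), so that $\Pr[T>n]\to 0$ and the boundary contribution to the expectation becomes negligible as $n$ grows, yielding $K_{\textup{BCC}}(r)\to N H_N$. The secondary delicate point is making the bound $L^*\ge m/r$ fully rigorous under arbitrary overlapping data placements and arbitrary (possibly compressing) encoders $\phi_i$, where the naive reasoning ``at least $m/r$ messages, each of size one'' must be replaced by the careful entropy/peeling accounting sketched above.
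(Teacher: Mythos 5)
Your proposal is correct and follows essentially the same route as the paper: a covering argument (each received worker contributes at most $r$ distinct examples, and all $m$ must be covered) for the lower bounds, the classic coupon-collector expectation $N H_N$ with $N=\lceil m/r \rceil$ for the recovery-threshold upper bound, and the observation that each BCC message has unit normalized size so that $L_{\textup{BCC}}=K_{\textup{BCC}}$. In fact you are more careful than the paper in the two places it glosses over — the entropy-based justification of $L^*\geq m/r$ under arbitrary encoders, and the truncation of the coupon-collector time at $n$ workers, which is exactly what the ``for sufficiently large $n$'' qualifier hides.
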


% \begin{theorem}\label{th:coupon}
% For a distributed gradient descent problem of training $m$ data examples distributedly over $n$ worker nodes, the minimum recovery threshold required at the master node is bounded as 
% \begin{align}
% \tfrac{m}{r} \leq K^*(r) \leq K_{\textup{BCC}}(r)= \lceil\tfrac{m}{r}\rceil H_{\lceil\tfrac{m}{r}\rceil}, \label{eq:RS} %\approx \lceil\tfrac{m}{r}\rceil \log (\lceil\tfrac{m}{r}\rceil),
% \end{align}
% for sufficiently large $n$, where $H_t = \sum_{k=1}^t \frac{1}{k}$ is the $t$-th harmonic number. Furthermore, the minimum communication load is bounded as 
% \begin{align}
% \tfrac{m}{r} \leq L^*(r) \leq L_{\textup{BCC}}(r)= \lceil\tfrac{m}{r}\rceil H_{\lceil\tfrac{m}{r}\rceil}. \label{eq:CL}
% \end{align}
% %\MS{this theorem needs to be restated. e.g.~would need to refer to various definitions. We should discuss this.}
% \end{theorem}

\begin{remark}
Given that $H_{\lceil\tfrac{m}{r}\rceil} \approx \lceil\tfrac{m}{r}\rceil \log (\lceil\tfrac{m}{r}\rceil)$, the results of Theorem~\ref{th:coupon} imply that for the homogeneous setting, the proposed BCC scheme simultaneously achieves a near minimal recovery threshold and communication load ( up to a logarithmic factor). $\hfill \square$
\end{remark}

% \begin{figure}[htbp]
% %\vspace{-5mm}
%   \centering
%   \includegraphics[width=0.45\textwidth]{compare.pdf}
%   \caption{The tradeoffs between the computational load and the communication load, for the task of processing $m=100$ data points using $n=100$ homogeneous computing nodes.}
%   %\vspace{-3mm}
%   \label{fig:compare}
% \end{figure}

\begin{remark}
As we mentioned before, other coding-based approaches~\cite{TLDK16,halbawi2017improving,raviv2017gradient} mostly focus on the worst-case scenario, resulting in a high recovery threshold e.g.~$K_{\textup{CR}}=m-r+1$.%This also leads to a high communication load, and subsequently, a long execution time, especially when the dimension of the gradient is large.
\footnote{This is assuming $m=n$. We would like to point out that although designed for the worst-case, the fractional scheme proposed in \cite{TLDK16} may finish when the master collects results from less than $m-r+1$ workers. However, it only applies to the case where $r$ divides $m$.} In contrast, instead of focusing on worst-case scenarios, our proposed scheme aims at achieving the ``coverage'' of the partial computation results at the master, by collecting the computation of a much smaller number of workers (on average). As numerically demonstrated in Fig.~\ref{fig:compare} in Section~\ref{sec:intro}, The BCC scheme brings down the recovery threshold from $m-r+1$ to roughly $\frac{m}{r}\log \frac{m}{r}$.
$\hfill \square$
\end{remark}

\begin{remark}
In the coded computing schemes proposed in~\cite{TLDK16,halbawi2017improving,raviv2017gradient}, a linear combination of the locally computed partial gradients is carefully designed at each worker, such that the final gradient can be recovered at the master with minimum message sizes communicated by the workers. In the BCC scheme, each worker also communicates a message of minimum size, which is created by summing up the local partial gradients. As a result, BCC achieves a much smaller recovery threshold and hence can substantially reduces the total amount of network traffic. This is especially true when the dimension of the gradient is large, leading to significant speed-ups in the job execution. $\hfill \square$
\end{remark}

\begin{remark}
The coded schemes in~\cite{TLDK16,halbawi2017improving,raviv2017gradient} are designed to make the system robust to a fixed number of stragglers. Specifically, for a cluster with $s$ stragglers, a code can be designed such that the master can proceed after receiving $m-s$ messages, no matter which $s$ workers are slow. However, it is often difficult to predict the number of stragglers in a cluster, and it can change across iterations of the GD algorithm, which makes the optimal selection of this parameter for the coding schemes in~\cite{TLDK16,halbawi2017improving,raviv2017gradient} practically challenging. In contrast, our proposed BCC scheme is \emph{universal}, i.e., it does not require any prior knowledge about the stragglers in the cluster, and still promises a near-optimal straggler mitigation.    
$\hfill \square$
\end{remark}

\begin{proof}[Proof of Theorem~\ref{th:coupon}]
The lower bound $\frac{m}{r}$ in (\ref{eq:RS}) and (\ref{eq:CL}) is straightforward. They correspond to the best-case scenario where all workers the master hears from before completing the task, have mutually disjoint training examples. The upper bound in (\ref{eq:RS}) and (\ref{eq:CL}) is simultaneously achieved by the above described BCC scheme. To see this, we view the process of collecting messages at the master node as the classic coupon collector's problem (see e.g.,~\cite{ross2012first}), in which given a collection of $N$ types of coupons, we need to draw uniformly at random, one coupon at a time with replacement, until we collect all types of coupons. In this case, we have $\lceil \frac{m}{r}\rceil$ batches of training examples, from which each worker independently selects one uniformly at random to process. It is clear that the process of collecting messages at the master is equivalent to collecting coupons of $N = \lceil \frac{m}{r}\rceil$ types. As we know that the expected numbers of draws to collect all $N$ different types of coupons is $N H_N$, we use $N = \lceil \frac{m}{r}\rceil$ and reach the upper bound on the minimum recovery threshold. To characterize the communication load of the BCC scheme, we first note that since each worker communicates the summation of its computed partial gradients, the message size from each worker is the same as the size of the gradient computed from a single example. As a result, a communication load of $1$ is accumulated from each surviving worker, and the BCC scheme achieves a communication load that is the same as the achieved recovery threshold.%\MS{don't understand the last sentence}
\end{proof}

Beyond the theoretical analysis, we also implement the proposed BCC scheme for distributed GD over Amazon EC2 clusters. In the next section, we describe the implementation details, and compare its empirical performance with two baseline schemes.

\subsection{Empirical Evaluations of BCC}
In this subsection, we present the results of experiments performed over Amazon EC2 clusters. In particular, we compare the performance of our proposed BCC scheme, with the following two schemes.
\begin{itemize}[leftmargin=*]
    \item uncoded scheme: In this case, there is no repetition in data among the workers and the master has to wait for all the workers to finish their computations. %{\color{red}{explain briefly}}
    \item cyclic repetition scheme of~\cite{TLDK16}: In this case, each worker processes $r$ training examples and in every iteration, the master waits for the fastest $m-r+1$ workers to finish their computations. %\MS{This needs to be rewritten. I don't understand this sentence at all.}
    %{\color{red}{explain briefly} }
\end{itemize}

\subsubsection{Experimental Setup}
We train a logistic regression model using Nesterov's accelerated gradient method. We compare the performance of the BCC, the uncoded and the cyclic repetition schemes on this task. We use Python as our programming language and MPI4py~\cite{dalcin2011parallel} for message passing across EC2 instances.
In our implementation, we load the assigned training examples onto the workers before the algorithms start. We measure the total running time via \texttt{Time.time()}, by subtracting the starting time of the iterations from the completion time at the master. In the $t$th iteration, the master communicates the latest model %$\beta{(t)}$
$\vct{w}_{t}$ to all the workers using \texttt{Isend()}, and each worker receives the updated model using \texttt{Irecv()}. %Afterwards, it starts computing the local gradeints. 
%Once each worker finishes computing the %local gradients 
%partial gradients on its local data, it sends the summation of them back to the master. 
In the cyclic repetition scheme, each worker sends the master a linear combination of the computed partial gradients, whose coefficients are specified by the coding scheme in~\cite{TLDK16}. In the BCC and uncoded schemes the workers simply send the summation of the partial gradients back to the master. When the master receives enough messages from the workers, it computes the overall gradient and updates the model. %using gradient descent, and goes to the next iteration.\\
%In the experiment, we do not insert artificial delays. 

%We would like to note that related work such as~\cite{TLDK16} have experiments with and without artificial delays.

\noindent {\bf Data Generation.} We generate artificial data using a similar model to that of~\cite{TLDK16}. %{\color{red}{Do we have to mention [7]?}} 
Specifically, we create a dataset consisting of $d$ input-output pairs of the form ${\bf D}=\{(\vct{x}_{1},y_{1}),(\vct{x}_{2},y_{2}),\ldots,(\vct{x}_d,y_d)\}$, where the input vector $\vct{x}_i \in \mathbb{R}^p$ contains $p$ features, and the output $y_i \in \{-1,1\}$ is the corresponding label. In our experiments we set $p = 8000$. To create the dataset, we first generate the true weight vector $\vct{w}^{*}$ whose coordinates are randomly chosen from $\{-1,1\}$. Then, we generate each input vector according to $\vct{x} \sim 0.5 \times \mathcal{N}(\vct{\mu}_{1},\vct{I})+0.5 \times \mathcal{N}(\vct{\mu}_{2},\vct{I})$ where $\vct{\mu}_{1}=\frac{1.5}{p}\vct{w}^{*}$ and $\vct{\mu}_{2}=\frac{-1.5}{p}\vct{w}^{*}$, and its corresponding output label according to $y \sim \mathit{Ber}(\kappa)$, with $\kappa=1/ (\exp(\vct{x}^{T}\vct {w}^{*})+1)$.

%After generating the dataset, we evenly split it into $m$ batches of size $\frac{d}{m}$. \MS{Removed the last sentence just refer to them as data batches.}

We run Nesterov's accelerated gradient descent distributedly for 100 iterations, using the aforementioned three schemes. We compare their performance in the following two scenarios:
\begin{itemize}[leftmargin=*]
\item scenario one: We use $51$ \textbf{t2.micro} instances, with one master and $n=50$ workers. We have $m=50$ data batches, each of which contains $100$ data points generated according to the aforementioned model. %We observe that BCC scheme speeds up the job execution by 85.4\% compared with the uncoded scheme, and by 69.9\% compared with the cyclic repetition scheme.
    
\item scenario two: We use $101$ \textbf{t2.micro} instances, with one master and $n=100$ workers. We have $m=100$ data batches, each of which contains $100$ data points.% We observe that BCC scheme speeds up the job execution by 75.2\% compared with the uncoded scheme, and by 69.7\% compared with the cyclic repetition scheme.
\end{itemize}

\subsubsection{Results}
For the uncoded scheme, each worker processes $r = \frac{m}{n}$ data batches. For the cyclic repetition and the BCC schemes, we select the computational load $r$ based on the memory constraints of the instances so as to minimize the total running times.

% In the BCC and cyclic repetition schemes there is only one free parameter, which is the computational load $r$ for each worker. We choose the parameter $r$ based on the memory constraint of the instances so as to minimize the total running time.

\begin{figure}[htbp]
  \centering
  \includegraphics[width=0.48\textwidth]{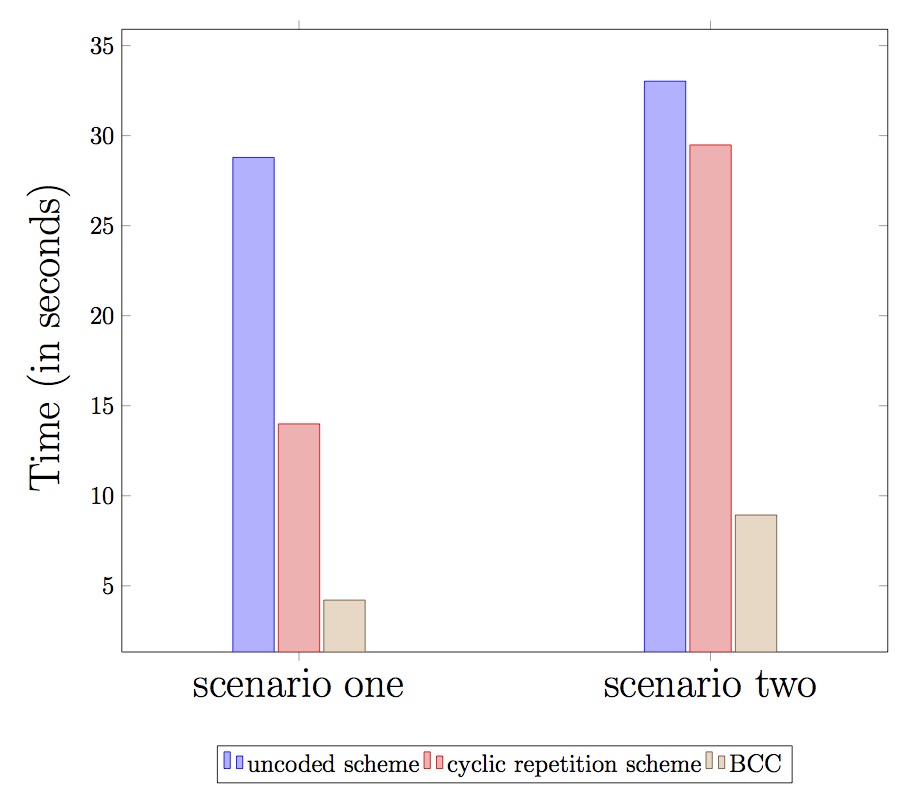}
  \caption{Running time comparison of the uncoded, the cyclic repetition, and the BCC schemes on Amazon EC2. In scenario one, we have $n=50$ workers, and $m=50$ data batches. In scenario two, we have $n=100$ workers, and $m=100$ data batches. Each data batch contains $100$ data points. In both scenarios, the cyclic repetition and the BCC schemes have a computational load of $r=10$.}
  \label{fig:run-time1}
  \vspace{-5mm}
\end{figure}

%In Fig. 3, when there are 50 workers, the running time in \emph{Coverage-Based} scheme is less than half of that in \emph{Cyclic Repetition} scheme and when there are 100 workers, the running time in \emph{Cyclic Repetition} scheme is about $6$ times more than that in \emph{Coverage-Based} scheme. \\
%In Fig. 4, the \emph{Coverage-Based} scheme has about $40$ percent gain compared to the \emph{Cyclic Repetition} scheme. {\color{red}{Talk about numerically how much we can save in Fig. 3 and Fig. 4}.}

%\MS{I did not edit after this point as I see the text has yet to be modified...}

We plot the total running times %{\color{red}{Is this time average overall many trials or is it just one trial?}}{\color{blue}{It is only one trial}} 
of the three schemes in both scenarios in Fig.~\ref{fig:run-time1}. We also list the breakdowns of the running times for scenario one in Table~\ref{table:scenario one} and scenario two in Table~\ref{table:scenario two} respectively. Within each iteration, we measure the computation time as the maximum computation time among the workers whose results are received by the master before the iteration ends. After the last iteration, we add the computation times in all iterations to reach the total computation time. The communication time is computed as the difference between the total running time and the computation time.\footnote{Due to the asynchronous nature of the distributed GD, we cannot exactly characterize the time spent on computation and communication (e..g., often both are happening at the same time). The numbers listed in Tables I and II provide approximations of the time breakdowns.}

We draw the following conclusions from these results.
\begin{itemize}[leftmargin=*]
\item As we observe in Fig. \ref{fig:run-time1}, in scenario one, the BCC scheme speeds up the job execution by 85.4\% over the uncoded scheme, and 69.9\% over the cyclic repetition scheme. In scenario two, the BCC scheme speeds up the job execution by 73.0\% over the uncoded scheme, and 69.7\% over the cyclic repetition scheme. In scenario one, we observe the master waiting for on average $11$ workers to finish their computations, compared with $41$ workers for the cyclic repetition scheme and all $50$ workers for the uncoded scheme. In scenario two, we observe the master waiting for on average $25$ workers to finish their computations, compared with $91$ workers for the cyclic repetition scheme and all $100$ workers for the uncoded scheme.

%the BCC scheme speeds up the job execution by up to 70\% compared with the cyclic repetition scheme. By Theorem \ref{th:coupon}, the recovery threshold at the master node in the BCC scheme is lower than that in Cyclic Repetition scheme. More specifically, Therefore, the running time in the BCC sheme is considerably lower than that in the Cyclic Repetition scheme.. 
    
\item As we note in Fig.~\ref{fig:run-time1}, the performance gains of both cyclic repetition and BCC schemes over the uncoded scheme become smaller with increasing number of workers. This is because that as the number of workers increases, in order to optimize the total running time, we need to also increase the computational load $r$ at each worker to maintain a low recovery threshold. However, due to the memory constraints at the worker instances, we cannot increase $r$ beyond the value $10$ to fully optimize the run-time performance.
    
\item  We observe from Table~\ref{table:scenario one} and Table~\ref{table:scenario two} that having a smaller recovery threshold benefits both the computation time and the communication time. While the BCC scheme and the cyclic repetition scheme have the same computational load at each worker, the computation time of BCC is much shorter since it needs to wait for a smaller number of workers to finish. On the other hand, lower recovery threshold of BCC yields a lower communication load that is directly proportional to the communication time. As a result, since in all experiments the communication time dominates the computation  time, the total running time of each scheme is approximately proportional to its recovery threshold.
\end{itemize}

\begin{table}
  \centering
  \scriptsize
  \begin{tabular}{|M{1cm}|M{1cm}|M{1.4cm}|M{1.1cm}|M{1cm}|}
    \hline
    scheme & recovery threshold & communication time (sec.) &computation time (sec.) &total running time (sec.) \\ \hline
    uncoded&50   &28.556&0.230&   28.786\\ \hline
    cyclic repetition&41&12.031  & 1.959   &13.990\\ \hline
    BCC&11&3.043 & 1.162&  4.205\\ \hline
  \end{tabular}
  \newline\newline
  \caption{Breakdowns of the running times  of the uncoded, the cyclic repetition, and the BCC schemes in scenario one.}
\label{table:scenario one}
\end{table}

\begin{table}
  \centering
  \scriptsize
  \begin{tabular}{|M{1cm}|M{1cm}|M{1.4cm}|M{1.1cm}|M{1cm}|}
    \hline
    scheme & recovery threshold          & communication time (sec.) &computation time (sec.) &total running time (sec.) \\ \hline
    uncoded&100   & 31.567   &1.453&   33.020\\ \hline
    cyclic repetition&91&24.698& 4.784   &29.482\\ \hline
    BCC&25&7.246 & 1.685&  8.931\\ \hline
  \end{tabular}
  \newline\newline
  \caption{Breakdowns of the running times  of the uncoded, the cyclic repetition, and the BCC schemes in scenario two.}
\label{table:scenario two}
\end{table}

\section{Extension to Heterogeneous Clusters}\label{sec:extension}
For distributed GD in heterogeneous clusters, workers have different computational and communication capabilities. In this case, the above proposed BCC scheme is in general sub-optimal due to its oblivion of network heterogeneity. In this section, we extend the above BCC scheme to tackle distributed DC over heterogeneous clusters. We also theoretically demonstrate that the extended BCC scheme provides an approximate characterization of the minimum job execution time.

\subsection{System Model}
In the heterogeneous setting, we consider an \emph{uncoded} communication scheme where after processing the local training examples, each worker communicates each of its locally computed partial gradients separately to the master. That is, Worker $i$, $i=1,\ldots,n$, communicates $\vct{z}_i = \{\vct{g}_j: j \in {\cal G}_i\}$ to the master. %incurring a communication load of $\ell_i = |{\cal G}_i| = r_i$. 
Under this communication scheme, the master computes the final gradient as soon as it collects the partial gradients computed from all examples. When this occurs, we say that \emph{coverage} is achieved at the master node.

We assume that the time required for Workers to process the local examples and deliver the partial gradients are independent from each other. We assume that this time interval, denoted by $T_i$ for Worker $i$, is a random variable with a shift-exponential distribution, i.e.,
\begin{align}
\text{Pr}[T_i \le t]=1-\exp\left(\tfrac{-\mu_i}{r_i}(t-a_i r_i)\right), \; t \geq a_i r_i.
\end{align}
Here, $\mu_i\geq 0$ and $a_i \geq 0$ are the fixed straggler and shift parameters of Worker~$i$. %We assume that $T_i$ and $T_j$ are independent for any $i\neq j$.

In this case, the total job execution time, or the time to achieve coverage at the master is given by
\begin{equation}
    T := \min \left\{t : \underset{i:T_i \leq t}{\cup}{\cal G}_i =\{1,\ldots,m\} \right\}.
\end{equation}

% On the other hand, for the coded communication model, in order to guarantee successful computation of the final result, there must exist a partitioning of the set $\{1,\ldots,m\}$ in the sets of indices of the data points processed by the workers that have communicated to the master. More precisely, we denote the job execution time under coded communication, or the time to achieve a partitioning, as
% \begin{align}
%     T\triangleq \min \left\{t : \exists {\cal S} \subseteq \{i:T_i \leq t\}, \text{such that} \underset{i \in {\cal S}}{\cup}{\cal G}_i =\{1,\ldots,m\}, \text{ and } {\cal G}_i \cap {\cal G}_j =\emptyset, \forall i,j \in {\cal S} \right\}.
% \end{align}\label{eq:partitionDef}

We are interested in characterizing the minimum average execution time in a heterogeneous cluster, which can be formulated as the following optimization problem.
\begin{align}
{\cal P}_1:\quad &\underset{\bf G}{\textup{minimize }}\mathbb{E}[T],\label{eq:cover}
% {\cal P}_2:\quad &\underset{\vct{g}}{\textup{minimize }}\mathbb{E}[T].\label{eq:partition}
\end{align}

In the rest of this section, we develop lower and upper bounds on the optimal value of ${\cal P}_1$.

\subsection{Lower and Upper Bounds on Optimal Value of ${\cal P}_1$}
To start, we first define the waiting time for the master to receive at least $s$ partial gradients (possibly with repetitions)
\begin{align}
\hat{T}(s) := \min \left\{t : \sum\limits_{i:T_i \leq t}r_i \geq s \right\}.
\end{align}
We also consider the following optimization problem 
\begin{align}
{\cal P}_2:\quad \underset{r_1,\ldots,r_n}{\textup{minimize }}\mathbb{E}[\hat{T}(s)].\label{eq:count}
\end{align}
%\MS{Please motivate this introduction further}

For the master to collect all $m$ partial gradients, one computed from each training example, for any dataset placement, it has to receive at least $s \geq m$ partial gradients (possibly with repetitions) from the workers. Therefore, it is obvious that the coverage time $T$ cannot be shorter than $\hat{T}(m)$, and the optimal value $\underset{r_1,\ldots,r_n}{\min}\mathbb{E}[\hat{T}(m)]$ provides a lower bound on the optimal value of the coverage problem ${\cal P}_1$. For the above optimization problem ${\cal P}_2$, an algorithm is developed in~\cite{reisizadehmobarakeh2017coded} for distributed matrix multiplication on heterogeneous clusters. This algorithm obtains computation loads $r_1,\ldots,r_n$ that are asymptotically optimal in the large $n$ limit. Therefore, utilizing the results in~\cite{reisizadehmobarakeh2017coded}, we can obtain a good estimate of the optimal value $\underset{r_1,\ldots,r_n}{\min}\mathbb{E}[\hat{T}(s)]$.

%\MS{Motivate this lemma}
It is intuitive that once we fix the work loads at the worker, i.e.,  $(r_1,r_2,\ldots,r_n)$, the time for the master to receive $s$ results $\hat{T}_s$ should increase as $s$ increases. We formally state this phenomenon in the following lemma.

\begin{lemma}[Monotonicity]\label{lemma:monotonic}
%\MS{This lemma needs to be re-written}
Consider an arbitrary dataset placement ${\bf G}$ where Worker~$i$ processes $|{\cal G}_i| = r_i$ training examples, for any $0 \leq s_1,s_2 \leq \sum_{i=1}^n r_i$, such that $s_1 \leq s_2$, we have 
\begin{align}
\mathbb{E}_{\bf G}[\hat{T}(s_1)] \leq \mathbb{E}_{\bf G}[\hat{T}(s_2)]. 
\end{align}
\end{lemma}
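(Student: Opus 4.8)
The plan is to prove the monotonicity statement via a sample-path (coupling) argument, which avoids computing the distribution of $\hat{T}(s)$ explicitly. First I would fix an arbitrary realization of the worker completion times $T_1, T_2, \ldots, T_n$ on the underlying probability space. On this single outcome, observe that the function $t \mapsto \sum_{i: T_i \leq t} r_i$ is nondecreasing and right-continuous in $t$: as $t$ grows, the set $\{i : T_i \leq t\}$ only gains workers, so the accumulated count of received partial gradients never decreases. The quantity $\hat{T}(s)$ is, by definition, the first hitting time of this nondecreasing step function to the level $s$.

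Next I would exploit the elementary fact that the first hitting time of a nondecreasing function to a higher level is at least the first hitting time to a lower level. Concretely, since $s_1 \leq s_2$, any time $t$ at which $\sum_{i: T_i \leq t} r_i \geq s_2$ also satisfies $\sum_{i: T_i \leq t} r_i \geq s_1$. Hence the set of times qualifying for the $s_2$-threshold is a subset of those qualifying for the $s_1$-threshold, and taking the infimum (minimum) over a smaller set yields a larger value. This gives the pointwise, per-realization inequality
\begin{align}
\hat{T}(s_1) \leq \hat{T}(s_2) \quad \text{almost surely.}
\end{align}

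Finally, I would pass from the almost-sure inequality to the inequality in expectation by monotonicity of the expectation operator: if $X \leq Y$ pointwise (almost surely) then $\mathbb{E}[X] \leq \mathbb{E}[Y]$, provided the expectations are well-defined. Since the $T_i$ have shift-exponential distributions with finite mean and the placement $\mathbf{G}$ fixes finitely many $r_i$, the hitting times $\hat{T}(s_1)$ and $\hat{T}(s_2)$ are integrable, so taking expectations under $\mathbf{G}$ preserves the inequality and delivers $\mathbb{E}_{\mathbf{G}}[\hat{T}(s_1)] \leq \mathbb{E}_{\mathbf{G}}[\hat{T}(s_2)]$.

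The argument is essentially routine once framed this way, so I do not anticipate a substantive obstacle; the only point requiring mild care is the boundary behavior when $s_1$ or $s_2$ equals $0$ or $\sum_{i=1}^n r_i$, and the observation that the infimum defining $\hat{T}(s_2)$ is taken over a nonempty set precisely because $s_2 \leq \sum_{i=1}^n r_i$ guarantees the level is eventually reached. The conceptual key is recognizing that no distributional computation is needed at all: monotonicity of $\hat{T}(\cdot)$ holds deterministically on every sample path, and the expectation statement follows immediately.
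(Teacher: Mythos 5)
Your proposal is correct and follows essentially the same route as the paper's proof: fix a realization of the worker completion times, observe that $\hat{T}(s)$ is pointwise nondecreasing in $s$ on each sample path, and then take expectations. The only difference is that you spell out the hitting-time justification that the paper dismisses as ``obvious,'' which is a welcome but not substantive elaboration.
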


\begin{proof}
For a fixed dataset placement ${\bf G}$, we consider a particular realization of the computation times across the $n$ workers, denoted by ${\boldsymbol \delta} =(t_1,t_2,\ldots,t_n)$, where $t_i$ is the realization of $T_i$ for Worker~$i$ to process $r_i$ data points. We denote the realization of $\hat{T}(s)$ under ${\boldsymbol \delta}$ as $\hat{t}^{\boldsymbol \delta}(s)$. Obviously, for $s_1 \leq s_2$, we have $\hat{t}^{\boldsymbol \delta}(s_1) \leq \hat{t}^{\boldsymbol \delta}(s_2)$. Since this is true for all realizations ${\boldsymbol \delta}$, we have $\mathbb{E}_{\bf G}[\hat{T}(s_1)] \leq \mathbb{E}_{\bf G}[\hat{T}(s_2)]$. 
\end{proof}

To tackle the distributed GD problem over heterogeneous cluster, we generalize the above BCC scheme, and characterize the completion time of the generalized scheme using the optimal value of the above problem ${\cal P}_2$. The characterized completion time serves as an upper bound on the minimum average coverage time. Next, we state this result in the following theorem.
\begin{theorem}\label{theorem:bounds}
For a distributed gradient descent problem of training $m$ data examples distributedly over $n$ heterogeneous worker nodes, where the computation and communication time at Worker $i$ has an exponential tail with a straggler parameter $\mu_i$ and a shift parameter $a_i$, the minimum average time to achieve coverage is bounded as 
\begin{align}
    \min\limits_{\bf G}\mathbb{E}[T] &\geq \min\limits_{r_1,\ldots,r_n}\mathbb{E}[\hat{T}(m)], \\ 
    \min\limits_{\bf G}\mathbb{E}[T] &\leq \min\limits_{r_1,\ldots,r_n}\mathbb{E}[\hat{T}(\lfloor cm \log m \rfloor)]+1,
\end{align}
where $c = 2 + \frac{\log (a+H_n/\mu)}{\log m}$, $a =\max(a_1,\ldots,a_n)$, $\mu =\min(\mu_1,\ldots,\mu_n)$.
%and $H_n = \sum_{i=1}^n\frac{1}{i}$ is the $n$th harmonic number.
\end{theorem}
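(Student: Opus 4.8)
The lower bound is immediate from the discussion preceding the theorem: for \emph{any} placement $\bf G$, achieving coverage requires the master to receive at least $m$ partial gradients (counting repetitions), so $T \ge \hat{T}(m)$ pathwise, hence $\mathbb{E}_{\bf G}[T] \ge \mathbb{E}_{\bf G}[\hat{T}(m)] \ge \min_{r_1,\dots,r_n}\mathbb{E}[\hat{T}(m)]$; minimizing over $\bf G$ gives the claim. For the upper bound the plan is to exhibit a single \emph{randomized} generalized-BCC placement and show its expected coverage time exceeds $\min_{r_1,\dots,r_n}\mathbb{E}[\hat{T}(S)]$, with $S:=\lfloor cm\log m \rfloor$, by at most $1$. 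I would fix loads $(r_1^*,\dots,r_n^*)$ attaining $\min_{r_1,\dots,r_n}\mathbb{E}[\hat{T}(S)]$ (in particular with $\sum_i r_i^*\ge S$), and let each Worker~$i$ choose its $r_i^*$ examples as i.i.d.\ uniform draws from $\{1,\dots,m\}$, so that the placement is a valid covering one and $T\le \max_i T_i<\infty$. Ordering the arriving gradients by worker completion time yields a stream whose \emph{labels} are i.i.d.\ uniform and, crucially, \emph{independent of the arrival times} (the order is a deterministic function of the $T_i$'s, which are independent of the example labels). Writing $\tau$ for the coupon-collector time, i.e.\ the number of draws until all $m$ examples have appeared, this gives the pathwise identity $T=\hat{T}(\tau)$, with $\tau$ independent of the timing process.

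The core reduction is to compare $\tau$ against $S$. On the event $\{\tau\le S\}$, the monotonicity of $s\mapsto\hat{T}(s)$ from Lemma~\ref{lemma:monotonic} (applied pathwise) gives $T=\hat{T}(\tau)\le \hat{T}(S)$, so the coverage time is dominated by the much cheaper count-time. The tail is controlled by a union bound over the $m$ examples, $\Pr[\tau>S]\le m(1-\tfrac1m)^S\le m\,e^{-S/m}\le e^{1/m}m^{\,1-c}$, using $S> cm\log m-1$. Splitting the expectation and using $T\le\max_i T_i$ on the complementary event,
\begin{align*}
\mathbb{E}[T] \;\le\; \mathbb{E}[\hat{T}(S)] + \mathbb{E}\big[(T-\hat{T}(S))\,\mathbb{1}_{\{\tau>S\}}\big] \;\le\; \mathbb{E}[\hat{T}(S)] + \mathbb{E}\big[\textstyle\max_i T_i\,\mathbb{1}_{\{\tau>S\}}\big].
\end{align*}
Because $\{\tau>S\}$ is a function of $\tau$ and hence independent of the timing, the last term factorizes as $\mathbb{E}[\max_i T_i\,\mathbb{1}_{\{\tau>S\}}]=\Pr[\tau>S]\,\mathbb{E}[\max_i T_i]\le e^{1/m}m^{1-c}\,\mathbb{E}[\max_i T_i]$.

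It remains to bound $\mathbb{E}[\max_i T_i]$ and verify the calibration of $c$. Writing $T_i=a_ir_i+\tfrac{r_i}{\mu_i}E_i$ with $E_i\sim\mathrm{Exp}(1)$, and using $r_i\le m$, $a_i\le a$, $\mu_i\ge\mu$, together with $\mathbb{E}[\max_{i\le n}E_i]=H_n$, one obtains $\mathbb{E}[\max_i T_i]\le am+\tfrac{m}{\mu}H_n=m(a+H_n/\mu)$. Substituting and using $c=2+\tfrac{\log(a+H_n/\mu)}{\log m}$, so that $m^{2-c}=(a+H_n/\mu)^{-1}$, the overshoot term becomes
\begin{align*}
\mathbb{E}\big[\textstyle\max_i T_i\,\mathbb{1}_{\{\tau>S\}}\big]\;\le\; e^{1/m}\,m^{1-c}\,m(a+H_n/\mu)\;=\;e^{1/m}\,m^{2-c}(a+H_n/\mu)\;=\;e^{1/m},
\end{align*}
which tends to $1$ for large $m$ (the harmless $e^{1/m}\to1$ factor from the floor, and the slack in the two crude bounds, is absorbed in the ``sufficiently large'' regime). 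This yields $\mathbb{E}[T]\le\mathbb{E}[\hat{T}(S)]+1$ for the chosen loads, and since this particular $\bf G$ is feasible, $\min_{\bf G}\mathbb{E}[T]\le\min_{r_1,\dots,r_n}\mathbb{E}[\hat{T}(\lfloor cm\log m\rfloor)]+1$. The main obstacle is this last step: controlling the expected \emph{overshoot} $\mathbb{E}[(T-\hat{T}(S))^+]$ on the rare non-coverage event is precisely where the constant $c$ must be engineered, trading the union-bound tail $m\,e^{-S/m}\approx m^{1-c}$ against the worst-case completion time $\mathbb{E}[\max_i T_i]\lesssim m(a+H_n/\mu)$ so that their product is $O(1)$ — this is exactly what forces the additive correction $\log(a+H_n/\mu)/\log m$ in $c$. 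A secondary subtlety is guaranteeing that the randomized placement covers all $m$ examples (so $T<\infty$) while preserving the independence between gradient labels and arrival order that makes the factorization above valid; this can be arranged by a negligible deterministic augmentation that only accelerates coverage and hence preserves the tail estimate.
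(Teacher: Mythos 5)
Your proposal is correct and follows essentially the same route as the paper's proof: fix the loads minimizing $\mathbb{E}[\hat{T}(\lfloor cm\log m\rfloor)]$, use a uniformly random (with-replacement) placement so the received gradients form a coupon-collector stream independent of the timing process, split on whether the collector count exceeds $cm\log m$, apply the monotonicity lemma on the good event, and on the bad event multiply the tail probability by $\mathbb{E}[\max_i T_i]\le m\left(a+H_n/\mu\right)$, with $c$ calibrated exactly so this product is $O(1)$. The only deviations are cosmetic: the paper cites a tail lemma giving exactly $m^{1-c}$ where you rederive it by a union bound (picking up the harmless $e^{1/m}$ slack you flag), it additionally routes through a without-replacement placement ${\bf G}_0$ dominated by the with-replacement one, and it handles the possibility that the random placement fails to cover all examples just as informally as your closing remark does.
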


The proof of Theorem~\ref{theorem:bounds} is deferred to the appendix.

\begin{remark}
The above theorem, when combined with the results in~\cite{reisizadehmobarakeh2017coded} on evaluating $\underset{r_1,\ldots,r_n}{\min}\mathbb{E}[\hat{T}(s)]$, allows us to obtain a good estimate on the average minimum coverage time. Specifically, we can apply the results in \cite{reisizadehmobarakeh2017coded} to evaluate the lower and upper bounds in Theorem~\ref{theorem:bounds} for $s=m$ and $s=\lfloor cm \log m \rfloor$, respectively.
$\hfill \square$
\end{remark}

\begin{remark}
The upper bound on the average coverage time is achieved by a generalized BCC scheme, in which given the optimal data assignments $(r_1^*,\ldots,r_n^*)$ for ${\cal P}_2$ with $s \!=\! \lfloor cm \log m \rfloor$, Worker~$i$ independently selects $r_i^*$ examples uniformly at random. We emphasize that similar to the BCC data distribution policy in the homogeneous setting, the main advantages of the generalized BCC lies in its simplicity and decentralized nature. That is, each node selects the training examples randomly and independently from the other nodes, and we do not need to enforce a global plan for the data distribution. This also provides a scalable design so that when a new worker is added to the cluster, according to the updated dataset assignments computed from ${\cal P}_2$ with $n+1$ workers and $s = \lfloor cm \log m \rfloor$, each worker can individually adjust its workload by randomly adding or dropping some training examples, without needing to coordinate with the master or other workers. $\hfill \square$
%\MS{I don't understand the last sentence}
\end{remark}

\vspace{-2mm}
\subsection{Numerical Results}
\vspace{-1mm}
We numerically evaluate the performance of the generalized BCC scheme in heterogeneous clusters, using the proposed random data assignment. In this case, we compute the optimal assignment $(r_1^*,\ldots,r_n^*)$ to minimize the average time for the master to collect $\lfloor m\log m \rfloor$ partial gradients. In comparison, we also consider a ``load balancing'' (LB) assignment strategy where the $m$ data points are distributed across the cluster based on workers' processing speeds, i.e., $r_i = \frac{\mu_i}{\sum \mu_i} m$.

We consider the computation task of processing $m=500$ examples over a heterogeneous cluster of $n=100$ workers. All workers have the same shift parameter $a_i = 20$, for all $i=1,\ldots,n$. The straggling parameter $\mu_i = 1$ for $95$ workers, and $\mu_i= 20$ for the remaining $5$ workers. As shown in Fig.~\ref{fig:numerical}, the computation of the LB assignment is long since the master needs to wait for every worker to finish. However, utilizing the proposed random assignment, the master can terminate the computation once it has achieved coverage, which significantly alleviates the straggler effect. As a result, the generalized BCC scheme reduces the average computation time by $29.28 \%$ compared with the LB scheme.

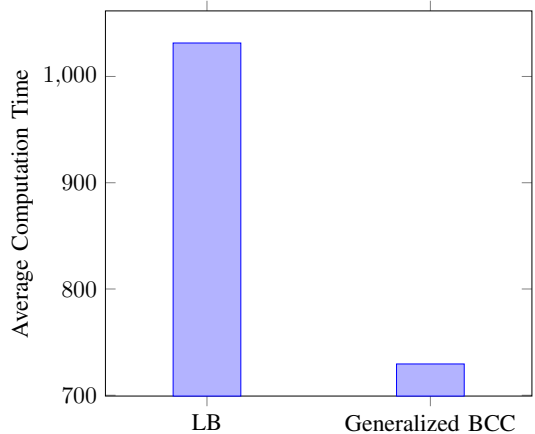
\begin{figure}[h]
\centering
\begin{tikzpicture}[scale=0.9] \begin{axis}[
ybar,
x=3.3cm,
bar width=1cm,
enlarge x limits={abs=1.5cm},
%legend style={at={(0.5,-0.15)},
%anchor=north,legend columns=-1},
ylabel={Average Computation Time},
symbolic x coords={LB,Generalized BCC},
xtick=data,
%nodes near coords,
%nodes near coords align={vertical},
]
\addplot coordinates {(LB,1031.5) (Generalized BCC,729.45)};
%\legend{Uncoded scheme,Cyclic Repetition scheme,Coverage-Based scheme}
\end{axis}
\end{tikzpicture}
\caption{Illustration of the performance gain achieved by generalized BCC scheme for a heterogeneous cluster.}
\label{fig:numerical}
\end{figure}

% \begin{figure}[htbp]
%   \centering
%   \includegraphics[width=0.5\textwidth]{heterogeneous.pdf}
%   \caption{Illustration of the performance gain achieved by generalized BCC scheme for a heterogeneous cluster.}
%   \label{fig:numerical}
% \end{figure}

\section{Conclusion}
We propose a distributed computing scheme, named batched coupon's collector (BCC), which effectively mitigates the straggler effect in distributed gradient descent algorithms. We theoretically illustrate that the BCC scheme is robust to the maximum number of stragglers to within a logarithmic factor. We also empirically demonstrate the performance gain of BCC over baseline straggler mitigation strategies on EC2 clusters. Finally, we generalize the BCC scheme to minimize the job execution time over heterogeneous clusters.

\bibliographystyle{IEEEtran}
\bibliography{ref-abb}

%\appendix

\section*{Appendix \\ Proof of Theorem~\ref{theorem:bounds}}

%\MS{Motivate this lemma}
Before starting the formal proof of Theorem~{\ref{theorem:bounds}}, we first state a result for the coupon collector's problem that will become useful later. We denote the random variable that represents the minimum number of coupons one needs to collect before obtaining all $m$ types of coupons as $M (M \geq m)$, and present an upper bound on the tail probability in the following lemma.

\begin{lemma}[Theorem 1.23 in \cite{auger2011theory}]\label{lemma:tail}
$\textup{Pr}(M \geq (1+\epsilon) m \log m) \leq \frac{1}{m^{\epsilon}}$, for any $\epsilon \geq 0$.
\end{lemma}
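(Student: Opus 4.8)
The plan is to prove this tail bound by the standard first-moment (union-bound) argument for the coupon collector's problem, reading $\log$ as the natural logarithm so as to stay consistent with the asymptotics $H_m \approx \log m$ used throughout the paper. Recall that $M$ counts the number of draws (coupons collected with replacement, each of the $m$ types equally likely) needed before all $m$ types have appeared. First I would fix a deterministic draw count $t := (1+\epsilon)m\log m$ and, for each type $i \in \{1,\dots,m\}$, introduce the event $A_i$ that type $i$ fails to appear among the first $t$ draws. The key structural observation is that \emph{not} having collected all types after $t$ draws is exactly the event that $M$ exceeds $t$, so that $\{M > t\} = \bigcup_{i=1}^m A_i$.

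Next I would compute $\Pr(A_i)$. Since the draws are independent and each avoids type $i$ with probability $1-\tfrac1m$, we get $\Pr(A_i) = \left(1-\tfrac1m\right)^t$, identically for every $i$. Applying the union bound over the $m$ types then gives
\begin{align}
\Pr(M > t) \;\le\; \sum_{i=1}^m \Pr(A_i) \;=\; m\left(1-\tfrac1m\right)^t \;\le\; m\, e^{-t/m},
\end{align}
where the last step uses the elementary inequality $1-x \le e^{-x}$ with $x = 1/m$. Substituting $t = (1+\epsilon)m\log m$ collapses the exponent, since $e^{-t/m} = e^{-(1+\epsilon)\log m} = m^{-(1+\epsilon)}$, and therefore $\Pr(M > t) \le m\cdot m^{-(1+\epsilon)} = m^{-\epsilon}$, which is the claimed bound. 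For $\epsilon = 0$ this reads $\Pr(\cdot)\le 1$, trivially valid, matching the stated range $\epsilon \ge 0$.

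The one point requiring care — and the only real obstacle — is reconciling the strict inequality $\{M > t\}$ naturally produced by the union bound with the event $\{M \ge (1+\epsilon)m\log m\}$ in the statement, compounded by the fact that $M$ is integer-valued while the threshold need not be. I would handle this by noting that $\Pr(M \ge s)$ is nonincreasing in the real threshold $s$ and that $\{M \ge s\} = \{M > \lceil s\rceil - 1\}$, so the identical union-bound computation applies to the integer draw count $\lceil s\rceil - 1$. The rounding introduces only a factor $e^{O(1/m)}$; replacing the crude $1-\tfrac1m \le e^{-1/m}$ with the sharper $-\log(1-\tfrac1m) \ge \tfrac1m + \tfrac{1}{2m^2}$ shows this lower-order loss is absorbed once $(1+\epsilon)\log m \gtrsim 2$, i.e.\ for all sufficiently large $m$ — precisely the asymptotic regime in which this lemma is invoked in the proof of Theorem~\ref{theorem:bounds}. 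This recovers the clean stated bound $\tfrac{1}{m^{\epsilon}}$.
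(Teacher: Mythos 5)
The paper contains no proof of this lemma to compare against: it is imported verbatim as Theorem~1.23 of \cite{auger2011theory} and used as a black box in the appendix. Your first-moment/union-bound argument is precisely the standard proof behind that cited result, and your computation is correct: $\{M>t\}=\bigcup_{i=1}^m A_i$, $\Pr(A_i)=\left(1-\tfrac{1}{m}\right)^t\le e^{-t/m}$ for an integer draw count $t$, and substituting $t=(1+\epsilon)m\log m$ yields $m\cdot m^{-(1+\epsilon)}=m^{-\epsilon}$. Moreover, your closing caution about the integrality/rounding issue is not pedantry; some restriction to large $m$ genuinely cannot be avoided, because the bound as literally stated fails for small $m$. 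For instance, with $m=2$ and $\epsilon=\tfrac{1}{\log 2}-1\approx 0.44$ one has $(1+\epsilon)m\log m=2$, while $\Pr(M\ge 2)=1>2^{-\epsilon}\approx 0.74$; even the strict-inequality variant fails at, say, $m=2$ and threshold $t=2.9$ (i.e.\ $\epsilon\approx 1.09$), where $\Pr(M>t)=\Pr(M\ge 3)=\tfrac12$ exceeds $2e^{-t/2}\approx 0.47$. Your repair---trading the rounding loss $e^{O(1/m)}$ against the second-order gain from $-\log\left(1-\tfrac1m\right)\ge \tfrac1m+\tfrac{1}{2m^2}$, which works once $(1+\epsilon)\log m\gtrsim 2$---is exactly the right accounting, and that condition is comfortably satisfied where the lemma is actually invoked in the proof of Theorem~\ref{theorem:bounds}: there it is applied with $\epsilon=c-1\ge 1$, since $c=2+\frac{\log(a+H_n/\mu)}{\log m}\ge 2$, so $(1+\epsilon)\log m\ge 2\log m$. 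In short, your proof is correct and self-contained where the paper only cites, and the ``sufficiently large $m$'' caveat you flag is inherent to the statement rather than a gap in your argument.
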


We prove Theorem~\ref{theorem:bounds} in two steps. In the first step, we propose a generalized BCC scheme, for which no batching operation is performed on the dataset, and the workers simply sample the examples to process uniformly at random. In the second step, we analyze the average execution time of the generalized BCC scheme. To start, we obtain an estimate of the number of partial gradients the master receives before coverage is achieved (analogous to the recovery threshold in the homogeneous setting). Then, conditioned on the value of this number, we derive an upper bound on the average coverage time, which is obviously also an upper bound on the minimum coverage time over all schemes. 

\begin{proof}[Proof of Theorem~\ref{theorem:bounds}]
For any dataset placement ${\bf G}$, whenever the collected partial gradients at the master cover the results from all $m$ data points, the master must have already collected at least $m$ partial gradients. Therefore, $\min\limits_{r_1,\ldots,r_n}\mathbb{E}[\hat{T}(m)] \leq \min\limits_{\bf G}\mathbb{E}[T]$.

Consider the optimization problem
\begin{align*}
\underset{r_1,\ldots,r_n}{\textup{minimize }}\mathbb{E}[\hat{T}(\lfloor cm \log m \rfloor)]    
\end{align*}
where $c$ is specified in the statement of Theorem~\ref{theorem:bounds}. Assume the optimal task assignment is given by
\begin{align}
(r_1^*,\ldots,r_n^*) &= \arg\min\limits_{r_1,\ldots,r_n}\mathbb{E}[\hat{T}(\lfloor cm \log m \rfloor)].
% t^* &= \min\limits_{r_1,\ldots,r_n}\mathbb{E}[\hat{T}(\lfloor cm \log m \rfloor)].
\end{align}

Now given $(r_1^*,\ldots,r_n^*)$, we consider a specific data distribution ${\bf G}_0$ in which Worker~$i$, selects $r_i^*$ out of $m$ training examples without replacement (independently and uniformly at random) and processes them locally. Next, we show that using this particular placement ${\bf G}_0$, we can achieve an average coverage time $\mathbb{E}[T]$ that is at most $\min\limits_{r_1,\ldots,r_n}\mathbb{E}[\hat{T}(\lfloor cm \log m \rfloor)]+1$. 

First, we consider a relaxed data distribution strategy ${\bf G}_1$ in which Worker~$i$, independently, and uniformly at random selects $r_i^*$ data points with replacement, and processes them locally. That is, ${\bf G}_1$ allows each worker to process an example more than once. It is obvious that
\begin{align}\label{eq:replacement}
\mathbb{E}_{{\bf G}_0}[T]\leq \mathbb{E}_{{\bf G}_1}[T].
\end{align}

We note that when using the data distribution  ${\bf G}_1$ the process of receiving partial gradients at the master mimics the process of collecting coupons in the coupon collector's problem. We define a random variable $W (W \geq m)$ as the minimum number of partial gradients (possibly with repetition) the master receives before it reaches coverage. We note that $W$ is statistically equivalent to the minimum number of coupons one needs to collect in the coupon collector's problem. In what follows, we only consider the case where the coverage can be achieving using the messages sent by all $n$ nodes (or the computation can be successfully executed), i.e., $W \leq \sum_{i=1}^n r_n^*$.
% Based on the specific construction of ${\bf G}_1$, we can utilize the result from the coupon collector problem to obtain
% \begin{align}
%     \mathbb{E}_{{\bf G}_1}[W] = m H_m.
% \end{align}

Taking expectation conditioned on the value of $W$, we have 
\begin{align}
&\hspace{1mm}\mathbb{E}_{{\bf G}_1}[T] \nonumber\\
&= \textup{Pr}(m \!\leq\! W \leq cm \log m)\mathbb{E}_{{\bf G}_1}[T|m \leq \!W\! \leq cm \log m] \nonumber \\
&\hspace{1mm}+\!\!\textup{Pr}(cm \log m \!<\! W \!\leq\!\! \sum_{i=1}^{n}\! r_i^{*}\!)\mathbb{E}_{{\bf G}_1}\!\!\left[\!T|cm \log m \!<\! W \!\!\leq\!\! \sum_{i=1}^{n} \!r_i^{*}\!\right]\\
&\leq \mathbb{E}_{{\bf G}_1}[\hat{T}(W)|m \leq W \leq \lfloor cm \log m \rfloor] \nonumber \\
&\hspace{1mm}+\! \textup{Pr}(W \!>\! cm \log m)\mathbb{E}_{{\bf G}_1}\!\!\left[\!\hat{T}(W)|cm \log m \!<\! W \!\leq\! \sum_{i=1}^{n} r_i^{*}\!\!\right]\\
&\overset{(a)}{\leq} \mathbb{E}_{{\bf G}_1}[\hat{T}(W)|m \leq W \leq \lfloor cm \log m \rfloor] \nonumber\\
&\hspace{1mm}+ \frac{1}{m^{c-1}}\mathbb{E}_{{\bf G}_1}\left[\hat{T}(W)|cm \log m < W \leq \sum_{i=1}^{n} r_i^{*}\right]\\
&\overset{(b)}{\leq} \mathbb{E}_{{\bf G}_1}\![\hat{T}(\lfloor cm \log m \rfloor)] \!+\! \frac{1}{m^{c-1}}\mathbb{E}_{{\bf G}_1}\!\left[\hat{T}\left(\sum_{i=1}^{n} r_i^{*}\right)\right]\\
&= \mathbb{E}_{{\bf G}_1}[\hat{T}( \lfloor cm \log m \rfloor)]\nonumber \\
&\hspace{1mm}+\frac{1}{m^{c-1}}\mathbb{E}_{{\bf G}_1}[\max(T_1,T_2,\ldots,T_n)]\\
&\overset{(c)}{\leq} \mathbb{E}_{{\bf G}_1}[\hat{T}( \lfloor cm \log m \rfloor)] \nonumber \\
&\hspace{1mm}+ \frac{1}{m^{c-1}}\mathbb{E}_{{\bf G}_1}[\max(\bar{T}_1,\bar{T}_2,\ldots,\bar{T}_n)] \\
&= \mathbb{E}_{{\bf G}_1}[\hat{T}( \lfloor cm \log m \rfloor)] +\frac{r^*}{m^{c-1}} \left(a + \frac{H_n}{\mu}\right)\\
&\leq  \mathbb{E}_{{\bf G}_1}[\hat{T}( \lfloor cm \log m \rfloor)] +\frac{a+\frac{H_n}{\mu}}{m^{c-2}}\\
&\overset{(d)}{=} \min\limits_{r_1,\ldots,r_n}\mathbb{E}[\hat{T}(\lfloor cm \log m \rfloor)]+1,\label{eq:countBound}
\end{align}
where step (a) is due to Lemma~\ref{lemma:tail}, and step (b) results from Lemma~\ref{lemma:monotonic} in Section~\ref{sec:extension}. In step (c), $\bar{T}_1,\bar{T}_2,\ldots,\bar{T}_n$ are i.i.d. random variables with the shift-exponential distribution
\begin{align}
\text{Pr}[\bar{T}_i \le t]=1-\exp\left(\tfrac{-\mu}{r^{*}}(t-a r^{*})\right), \; t \geq a r^{*},
\end{align}
for all $i=1,\ldots,n$, where $\mu = \min(\mu_1,\ldots,\mu_n)$, $a = \max(a_1,\ldots,a_n)$, and $r^* = \max(r^*_1,\ldots,r^*_n)$. Step (d) is because that we choose $c = 2 + \frac{\log (a+H_n/\mu)}{\log m}$.

Finally, we have from (\ref{eq:replacement}) and (\ref{eq:countBound}) that $\min\limits_{\bf G}\mathbb{E}[T] \leq \mathbb{E}_{{\bf G}_0}[T]\leq \mathbb{E}_{{\bf G}_1}[T] \leq  \min\limits_{r_1,\ldots,r_n}\mathbb{E}[\hat{T}(\lfloor cm \log m \rfloor)]+1$. 
\end{proof}

\end{document}